\begin{document}

\newtheorem{definition}{\bf Definition}
\newtheorem{proposition}{\bf Proposition}
\newtheorem{theorem}{\bf Theorem}
\newtheorem{lemma}{\bf Lemma}

\renewcommand{\algorithmicrequire}{\textbf{Input:}}
\renewcommand{\algorithmicensure}{\textbf{Output:}}

\title{\LARGE{Social Data Offloading in D2D-Enhanced Cellular Networks by Network Formation Games}}

\author{
\IEEEauthorblockN{
\small{Tianyu Wang}\IEEEauthorrefmark{1},
\small{Yue Sun}\IEEEauthorrefmark{1},
\small{Lingyang Song}\IEEEauthorrefmark{1},
\small{and Zhu Han}\IEEEauthorrefmark{2} \\}
\IEEEauthorblockA{
\IEEEauthorrefmark{1}\small{School of Electrical Engineering and Computer Science, Peking University, Beijing, China,\\ Email: tianyu.alex.wang@pku.edu.cn; lingyang.song@pku.edu.cn} \\
\IEEEauthorrefmark{2}\small{Electrical and Computer Engineering Department, University of Houston, Houston, TX, USA, \\ Email: zhan2@uh.edu}\\}
}

\maketitle

\thispagestyle{empty}

\begin{abstract}

Recently, cellular networks are severely overloaded by social-based services, such as YouTube, Facebook and Twitter, in which thousands of clients subscribe a common content provider (e.g., a popular singer) and download his/her content updates all the time. Offloading such traffic through complementary networks, such as a delay tolerant network formed by device-to-device (D2D) communications between mobile subscribers, is a promising solution to reduce the cellular burdens. In the existing solutions, mobile users are assumed to be volunteers who selfishlessly deliver the content to every other user in proximity while moving. However, practical users are selfish and they will evaluate their individual payoffs in the D2D sharing process, which may highly influence the network performance compared to the case of selfishless users. In this paper, we take user selfishness into consideration and propose a \emph{network formation game} to capture the dynamic characteristics of selfish behaviors. In the proposed game, we provide the utility function of each user and specify the conditions under which the subscribers are guaranteed to converge to a stable network. Then, we propose a practical network formation algorithm in which the users can decide their D2D sharing strategies based on their historical records. Simulation results show that user selfishness can highly degrade the efficiency of data offloading, compared with ideal volunteer users. Also, the decrease caused by user selfishness can be highly affected by the cost ratio between the cellular transmission and D2D transmission, the access delays, and mobility patterns.

\end{abstract}

\newpage

\section{Introduction}%

Due to massive growth of smart devices and increasing popularity of mobile applications, mobile traffic is expected to grow continuously at a rapid rate in the next few years. According to Cisco's latest Visual Networking Index (VNI) report, the global mobile data traffic will increase nearly $11$-fold from 2013 to 2018, and the average traffic generated by a smartphone will be $2.7$ GB per month by 2018, a $5$-fold increase over the 2013 average of $529$ MB per month~\cite{Cisco-2014}. In order to cope with this ongoing tsunami of mobile data demand, the operators have put forward multiple solutions, such as upgrading the existing networks, building up new networks, and implementing a usage-based pricing plan.

Mobile traffic offloading, or mobile data offloading, which refers to the use of complementary network technologies (e.g., Wi-Fi networks) to deliver data originally targeted for cellular networks, has attracted a lot of attentions from both the industry and academia~\cite{LLYRC-2013, Cisco-2012, MLQ-2011, PJAK-2013, APJGK-2014, HHKMPS-2010, ICM-2009, LQJHWC-2014, ZGCH-2014}. On the one hand, it is compelling to operators who want to reduce cellular traffic load with a low-cost solution and make better use of their existing assets. On the other hand, it is also appealing to end-users, who enjoy the low price and high data rate of the complementary networks. A detailed survey of mobile traffic offloading can be found in~\cite{AAA-2013}.

Most of the existing efforts have focused on Wi-Fi networks, in which cellular traffic is delivered by Wi-Fi connections when mobile users are within the coverage of Wi-Fi access points~\cite{LLYRC-2013, Cisco-2012}. Femtocells can also be used for mobile traffic offloading, which offload both air interface traffic and core network traffic from cellular systems~\cite{MLQ-2011}. Some other offloading approaches exploit short-range communication techniques, such as device-to-device (D2D) communication and Wi-Fi direct, to offload the traffic of context-aware applications~\cite{PJAK-2013, APJGK-2014}. Recently, social network services (SNSs) have taken a great portion of our daily life. In SNSs, a large number of clients subscribe a common content provider that frequently pushes multimedia content to the subscribers, e.g., text, photos, or videos, which generates thousands of duplicated downloads of the same content, and thus, consumes a great amount of bandwidth in cellular systems~\cite{CKRAM-2007}.

Many efforts are being carried out to offload such traffic by exploiting opportunistic communications among mobile users~\cite{HHKMPS-2010, ICM-2009, LQJHWC-2014, ZGCH-2014}. Specifically, the content provider first pushes the content to a target set of users as seeds via cellular networks, and later, the seed users use short-range communication techniques to share the content with other users in proximity while moving. At last, if some user fails to receive the content after a certain delay, it will download directly from the content provider via cellular networks. Therefore, some of the non-seed users can be satisfied by opportunistic delivery and their cellular bandwidth can be offloaded. The efficiency of opportunistic offloading can be guaranteed because content subscribers can usually tolerate different delays between the content generation time and the user access time~\cite{YL-2011, GT-2010}, and meanwhile, they can frequently meet each other due to homophily and locality~\cite{WPDAZ-2010, RBCGA-2011}.

Most of the existing approaches of opportunistic offloading focus on maximizing the overall system performance, such as the total bandwidth offloaded from the cellular networks, or the average delay of subscribers~\cite{HHKMPS-2010, ICM-2009, LQJHWC-2014}. In order to maximize the system performance, these approaches require the seed users to download the content right after the content is posted online, and require all mobile users to unconditionally share the content with each other when they are in proximity. However, since both cellular and D2D transmissions may involve monetary cost of mobile users, these assumptions may not be hold for selfish users, who only consider their individual payoffs rather than the overall system performance. First, there is no incentive for a selfish user to join such approaches as a seed user, since seed users have to download each content update via cellular networks, while at the same time, to deliver the content to non-seed users via opportunistic communications. Second, for the opportunistic sharing process, selfish users may have preference on the users that access the content at different times. Usually, they tend to communicate with the users with early access times, since these users have a high probability to deliver the content, and thus, save the expensive cellular transmission. At last, privacy issues may also restrict opportunistic communications. Usually, the users can help each other by opportunistic sharing only if they have a social relationship in the considered SNS. In order to capture the dynamic behaviors of practical users and analyze its influence on the overall system performance, it is necessary to reconsider opportunistic offloading from a new perspective that treats each mobile user as an independent decision maker.

The idea of self-deciding users naturally leads us to game theory, in which all users are treated as rational players that focus on maximizing their own utilities. In this paper, we consider opportunistic offloading in D2D-enhanced cellular networks as a \emph{network formation game}. Note that we specify on D2D-enhanced cellular networks only because they can provide a system-level short-range communication solution, which avoids the tedious discussions on the establishment of opportunistic connections as in other cases. The contributions of this paper are as follows:
\begin{enumerate}
\item First, we propose a user-centered opportunistic offloading approach, in which each user can download the content at arbitrary times via cellular networks, and meanwhile, independently decide whether to opportunistically share the content with another user. Specifically, we formulate the problem as a \emph{network formation game}, in which the users autonomously form a cooperative network, and promise D2D sharing with their adjacent users.
\item Second, we provide the stability analysis of the dynamics of network formation and specify the conditions to form a \emph{pairwise stable} network. Based on the network formation analysis, we propose a low-complex distributed network formation algorithm for the mobile users to decide their D2D transmissions in practice.
\item Third, simulation results show that the selfishness of mobile users can highly degrade the offloading efficiency, and gap is highly affected by the cost ratio between the cellular transmission and the D2D transmission, as well as the users' access delays and mobility patterns.
\end{enumerate}

The rest of the paper is organized as follows. In Section~\uppercase\expandafter{\romannumeral2}, we provide the related works in the literature. In Section~\uppercase\expandafter{\romannumeral3}, we present the system model by characterizing the users in terms of access delay, mobility pattern, and individual payoff. In Section~\uppercase\expandafter{\romannumeral4}, we formulate the considered problem as a \emph{network formation game} so as to analyze the users' selfish behaviors. In Section~\uppercase\expandafter{\romannumeral5}, we specify the conditions under which the users will form a \emph{pairwise stable} network, and propose a practical network formation algorithm for the users can make selfish decisions based on their historical records. In Section~\uppercase\expandafter{\romannumeral6}, we provide the simulation results of algorithm convergence, offloading efficiency, and user payoff. In Section~\uppercase\expandafter{\romannumeral7}, we discuss some extensions of the proposed game-theoretical approach, and conclude the paper.

\section{Related Works}%

\subsection{Opportunistic Offloading}

In~\cite{HHKMPS-2010, ICM-2009, LQJHWC-2014}, the proposed approaches assume a central controller to decide the seed users, and all users are required to unconditionally share its copy with every other user in proximity. As we noted, these approaches focus on the system performance and do no consider the selfishness of practical users, which is completely from the game-theoretic perspective of this paper. In \cite{ZGCH-2014}, the authors propose an incentive framework to motivate the subscribers to leverage their delay tolerance for traffic offloading, in which the subscribers act as sellers to sell their tolerant delay by submitting a price to the content provider, and the content provider acts as the buyer to buy the delay tolerance from the subscribers. This mechanism gives mobile users the freedom to decide their tolerant delays, but still requires unconditional seeding and sharing as in previous approaches~\cite{ICM-2009, HHKMPS-2010, LQJHWC-2014}.

In our previous paper~\cite{SWSH-2014}, we consider to use D2D communications to facilitate real-time multicasting services, such as football games broadcasting, by constructing a dynamic ad hoc network to detour the traffic from the base station. In this paper, we consider a different delay-tolerant publish/subscribe service, in which D2D links are utilized as opportunistic connections rather than instant ad hoc links. Therefore, this paper is not a simple extension of~\cite{SWSH-2014}.

\subsection{Data Forwarding and Data Dissemination in DTNs}

Data forwarding and data dissemination are highly related to the topic of opportunistic offloading. They all share the idea of delay tolerant networks and use short-range communications to improve the system performance. However, in data forwarding, there exists no cellular infrastructure and the problem is how to efficiently deliver the data from the source to the destination via opportunistic forwarding~\cite{LDS-2003, JLW-2007, YHCC-2007, HCY-2011}, rather than offloading traffic from cellular networks. In data dissemination, or data spreading, the problem is how to predict and accelerate the propagation of information in social networks, considering how we initially push the information, as well as how social impacts motivate the agents to re-share it~\cite{WD-2007,RBCGA-2011}. Recently, due to the rise of mobile devices, some studies combine the online SNS and the offline mobile social networks so as to facilitate the information spreading process, while at the same time, reduce the cellular bandwidth by opportunistic communications between mobile users~\cite{ZSSDH-2013, WCHWK-2014}. Compared to opportunistic offloading, data dissemination does not have a strict publish/subscribe model, that is, there is not a specific set of subscribers that require the content. The ``subscribers" in data dissemination are gradually decided as the content is re-shared among agents in the online SNS. Although we do not consider data dissemination in this paper, we will show that the proposed game theory model and network formation algorithm can be extended to such scenarios.

\section{System Model}%

\begin{figure}[!t]
\centering
\includegraphics[width=3.2in]{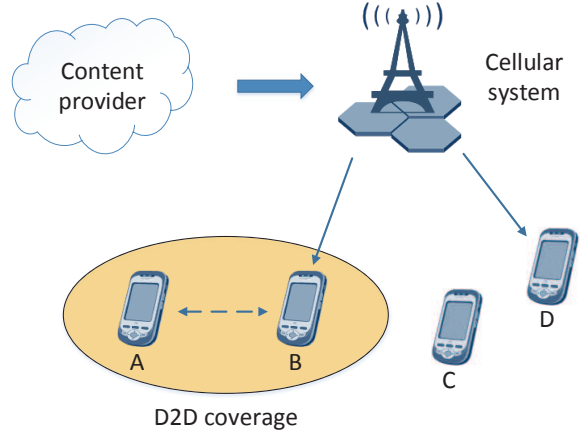}
\caption{Illustration of opportunistic offloading in a D2D-enhanced network.} \label{system}
\end{figure}

We consider a delay-tolerant publish/subscribe service in a D2D-enhanced cellular network, in which a content provider pushes information to $N$ subscribers, the set of which is given by $\Omega = \{1,2,\ldots,N\}$, and the subscribers access the content with different delays. Each subscriber is an independent user that can make long-term D2D sharing agreements with other users, that is, if an agreement is made between user $i$ and user $j$, they will opportunistically share the content with each other using D2D communications when they are in proximity. If some user fails to achieve the content via opportunistic sharing before its tolerable delay, it will instantly download the content via cellular networks. As illustrated in Fig.~\ref{system}, users B and D download the content via cellular networks, user A opportunistically receives the content from user B who has an agreement with user A, and user C still waits for opportunistic sharing from other users. We assume the cellular network is in charge of the functions of user discovery and D2D establishment, and the associated signaling is negligible to size of the subscribed content. In the following part of this section, we will characterize the users' access delays and mobility patterns, and then formulate their individual gain and cost in opportunistic offloading.

\subsection{Access Delay}

In publish/subscribe applications, subscribers may access the service at different frequencies, and thus, endure different delays~\cite{YL-2011, GT-2010}. The delay tolerance of a subscriber is usually determined by the user's lifestyle, and it can be quantified by the \emph{access delay}, which is defined as the time between the content generation time and the user's access time. We assume the content is always generated at $t=0$, and denote the access delay of user $i$ as $a_i$. In general, $a_i$ is a random variable defined in interval $[0,+\infty)$, with its probability distribution function (PDF) determined by user $i$'s lifestyle. In order to model the various PDFs of $a_i$, we use the Weibull distribution to profile the access delays~\cite{GT-2010}:
\begin{equation} \label{access-delay}
Pr\{a_i = t\} = f^A_i(t, \lambda_i, k_i) = \frac{k_i}{\lambda_i} \left(\frac{t}{\lambda_i}\right)^{k_i-1}\exp\left\{-\left(\frac{t}{\lambda_i}\right)^{k_i}\right\}, t \ge 0,
\end{equation}
where the parameter $k_i>0$ decides the shape of $a_i$'s PDF, i.e., $a_i$ concentrates on higher values if $k_i$ is larger, and the parameter $\lambda_i>0$ is a scale parameter that roughly decides the expected value of $a_i$, i.e., $a_i$ is larger when $\lambda_i>0$ is larger. We assume that the access times $\{a_i  | i \in \Omega\}$ are stochastically independent.

\subsection{Contact Graph}

It has been studied that mobile users have different mobility patterns~\cite{ZNKT-2007, KLV-2010}. For a pair of users, the time line can be divided into contact times and inter-contact times. Contact times are when the users stay in D2D range, and could therefore perform opportunistic sharing. Inter-contact times are times between two consecutive contacts. Referring to~\cite{ZNKT-2007, KLV-2010}, for a characteristic time in the order of half a day, we assume that inter-contact time of any two users $i$ and $j$, denoted by $b_{i,j}$, follow the Pareto distribution:
\begin{equation} \label{inter-contact}
Pr\{b_{i,j} = t\} = f^B_{i,j}(t, \tau_{i,j}, \alpha_{i,j}) = \frac{\alpha_{i,j} t^{\alpha_{i,j}}}{t^{\alpha_{i,j}+1}}, t \ge \tau_{i,j},
\end{equation}
where the parameter $\alpha_{i,j}>1$ decides the shape of $b_{i,j}$'s PDF, i.e., $b_{i,j}$ concentrates more on low values if $\alpha_{i,j}$ is larger, and the scale parameter $\tau_{i,j}>0$ identifies the minimum possible value of $b_{i,j}$. The complementary cumulative distribution function of $b_{i,j}$ is given by:
\begin{equation}
Pr\{b_{i,j} > t\} = \bar{F}^B_{i,j}(t, \tau_{i,j}, \alpha_{i,j}) = \left(\frac{\tau_{i,j}}{t}\right)^{\alpha_{i,j}}, t \ge \tau_{i,j}.
\end{equation}
Then, the probability that user $i$ and $j$ meet each other at least once during time period $\Delta t$ is given by:
\begin{equation} \label{meet}
P^{\text{meet}}_{i,j}(\Delta t) = 1 - \left(\frac{\tau_{i,j}}{\Delta t}\right)^{\alpha_{i,j}}.
\end{equation}

We assume that the inter-contact times $\{b_{i,j} | i,j \in \Omega\}$ are stochastically independent. Also, we assume that the content can be instantly transmitted via D2D communications, and the ``effective" contact time can be negligible compared to the inter-contact time. Thus, the reciprocal of the expected inter-contact time, $\frac{\alpha_{i,j} - 1}{\alpha_{i,j} \tau_{i,j}}$, simply indicates the associated contact rate between user $i$ and user $j$. In the literature, the mobility characteristic is usually represented by a complete, weighted and undirected graph of all involved users, referred to as \emph{contact graph}~\cite{HHKMPS-2010, ICM-2009, YHCC-2007, HCY-2011}, where the weight of the edge between user $i$ and user $j$ is the associated contact rate $\frac{\alpha_{i,j} - 1}{\alpha_{i,j} \tau_{i,j}}$.

\subsection{Gain and Cost}

To analyze opportunistic offloading from the user's point of view, we need to specify the gain and cost of a mobile user in this mechanism. Note that we assume a user $i$ will enjoy the content only after its access delay $a_i$, no matter whether it receives the content via cellular networks or via D2D communications. Thus, there is no difference in user experience between the cellular service or the opportunistic offloading service. The only difference is that opportunistic offloading may delivery the content through D2D communications, and thus, save the expensive cost of cellular downloading for user $i$. We assume the monetary cost of downloading the content via cellular networks is homogeneous for all subscribers, and denote it by $v_{c}$. Therefore, the gain of user $i$ in opportunistic offloading is given by:
\begin{equation} \label{gain}
g_i = v_{c} \cdot n^{to}_i,
\end{equation}
where $n^{to}_i$ is the expected number of D2D delivery from other users to user $i$.

In opportunistic offloading, user $i$ may transmit to or receive from other users via D2D communications, which may cause monetary and power cost. Here, we simply assume that the cost of D2D transmission is homogenously $v_{d}$, and that the cost of D2D reception is homogenously zero. Usually, we have $v_{d} \ll v_c$. Therefore, the cost of user $i$ in opportunistic offloading is given by:
\begin{equation} \label{cost}
c_i = v_{d} \cdot n^{from}_i,
\end{equation}
where $n^{from}_i$ is the expected number of D2D delivery from user $i$ to other users.

\section{Social Data Offloading as a Network Formation Game}%

In this section, we introduce the proposed game-theoretic model, \emph{network formation games}, using which the individual profits of mobile users rather than the overall system performance are considered to have the first priority in opportunistic offloading.

\subsection{Network Formation Games}

\emph{Network formation games} describe how a set of players can cooperate with each other by forming a network and thereby increase their utilities~\cite{PR-2013, SHDHB-2009}. As for the considered opportunistic offloading problem, the selfish mobile users are represented by the players in the game, and their agreements on opportunistic sharing are represented by the links between players. For each user, its gain and cost can be captured by a function that depends on the entire network structure formed by all the users. Formally, we define a network formation game as follows:

\begin{definition}\label{NFG-definition}
A network formation game is uniquely defined by a set of players $\Omega$ and their network payoff functions $\Phi_i(\cdot), i \in \Omega$. For each player $i \in \Omega$ and each network $G$ formed by all the players (i.e., $V(G)=\Omega$), $\Phi_i(G)$ is the payoff of player $i$ in network $G$.
\end{definition}

Given that the opportunistic sharing agreement is a bilateral contact that indicates a mutual obligation, the links between players are undirected edges. Also, for obvious reasons, loops and multiple edges are not allowed in the network. Thus, the network $G$ formed by the players is an undirected simple graph, and we denote by $\mathbb{G}$ as the set of all such graphs. For any network $G \in \mathbb{G}$, we define $G^i$ as the maximal connected subgraph of $G$ that contains user $i$, i.e. $G^i \subseteq G$ is a connected subgraph with $i \in V(G^i)$, and for any connected subgraph $G' \subseteq G$ with $i \in V(G')$, we have $V(G') \subseteq V(G^i)$.

\subsection{Network Payoff Functions}

To complete the proposed network formation game, we need to specify the network payoff function $\Phi_i(G)$ for each player $i \in \Omega$. In general, the payoff function of player $i \in \Omega$ can be any function that increases with the gain $g_i$ in (\ref{gain}) and decreases with the cost $c_i$ in (\ref{cost}). Here, for any network $G \in \mathbb{G}$, we simply define the payoff of player $i$ as follows:
\begin{equation} \label{formal-payoff}
\Phi_i(G) = g_i(G) - c_i(G).
\end{equation}
Note that only the users that are connected to user $i$ can effect user $i$'s delivery performance. The above payoff function can also be written as $\Phi_i(G_i) = g_i(G^i) - c_i(G^i)$.

However, even we restrict our consideration to $G^i$, calculating of $\Phi_i(G^i)$ is extremely difficult due to the statistical correlation of the players in $V(G^i)$. In order to simply the calculation, we estimate the value of $\Phi_i(G^i)$ by considering only one-hop D2D transmissions, and denote the estimated value as $\bar{\Phi}_i(G^i)$. For any network $G \in \mathbb{G}$ and any player $i \in \Omega$, we denote by $S^i$ as the set of player $i$'s adjacent players, i.e., $S^i = \{j \in \Omega | e_{i,j} \in E(G)\}$. We remark the players in $S^i$ as $\alpha_1, \alpha_2, \ldots, \alpha_{|S_i|}$, and without loss of generality, we assume that the access delays of these players satisfy $a_{\alpha_1} \le \ldots a_{\alpha_{h}} \le a_i \le a_{\alpha_{h+1}} \le \ldots \le a_{\alpha_{|S^i|}}$.

First, we note that only the players $\{\alpha_l\}, 1 \le l \le h$ can download the content before time $a_i$ and thus can share the content with player $i$ via one-hop transmissions. For any player $\alpha_l, 1 \le l \le h$, it can share the content to player $i$ between the time interval $[a_{\alpha_l},a_i)$, if there is no other player that receives the content and meets player $i$ at an earlier time. Thus, the probability that player $i$ receives the content from player $\alpha_l$ is given by
\begin{equation} \label{P-alphal-i}
P_{\alpha_l,i} = \iiint\nolimits_{0}^{+\infty} \rho_{\alpha_l,i} \cdot f^A_{i}(a_{i}) f^A_{\alpha_1}(a_{\alpha_1}) \ldots f^A_{\alpha_{|S^i|}} (a_{\alpha_{|S^i|}}) \cdot da_i da_{\alpha_1} \ldots da_{\alpha_{|S^i|}},
\end{equation}
with
\begin{equation} \label{rho-alphal-i}
\rho_{\alpha_l,i} = \int\limits_{a_{\alpha_l}}^{a_i} \left\{\prod\limits_{l'=1,l' \ne l}^{h} \left[ 1 - P^{\text{meet}}_{i,\alpha_{l'}}(t-a_{\alpha_{l'}}) \right]\right\} f^B_{i,\alpha_l}(t-a_{\alpha_{l}}) dt,
\end{equation}
where $f^A_{i}(\cdot,\cdot,\cdot)$, $f^B_{i,j}(\cdot, \cdot)$ and $P^{\text{meet}}_{i,j}(\cdot)$ are given as in (\ref{access-delay}), (\ref{inter-contact}) and (\ref{meet}), respectively, and some parameters are omitted for the sake of conciseness.

Also, player $i$ may download the content directly from the base station at time $a_i$, if no other players can deliver the content before $a_i$, the probability of which is given by:
\begin{equation} \label{P-i-i}
P_{i,i} = \iiint\nolimits_{0}^{+\infty} \rho_{i,i} \cdot f^A_{i}(a_{i}) f^A_{\alpha_1}(a_{\alpha_1}) \ldots f^A_{\alpha_{|S^i|}} (a_{\alpha_{|S^i|}}) \cdot da_i da_{\alpha_1} \ldots da_{\alpha_{|S^i|}},
\end{equation}
with
\begin{equation} \label{rho-i-i}
\rho_{i,i} = \prod\limits_{l=1}^{h} \left[ 1 - P^{\text{meet}}_{i,\alpha_{l}}(a_i-a_{\alpha_{l}}) \right],
\end{equation}
where $f^A_{i}(\cdot,\cdot,\cdot)$ and $P^{\text{meet}}_{i,j}(\cdot)$ are given as in (\ref{access-delay}) and (\ref{meet}), respectively, with some parameters omitted. Note that the one-hop assumption narrows our consideration down to the adjacent player in $S_i$, and more importantly, eliminates the statistical correlation between them, such that we can calculate the effect of different players separately as in (\ref{P-alphal-i}) and (\ref{P-i-i}). Naturally, we have $\sum\nolimits_{l=1}^{h} P_{\alpha_l,i} + P_{i,i} = 1$.

Given (\ref{P-alphal-i}) and (\ref{P-i-i}), we estimate the gain, cost, and thus the payoff of player $i$ as follows:
\begin{equation} \label{gain-es}
\bar{g}_i(S^i) = v_c \cdot (1 - P_{i,i}),
\end{equation}
\begin{equation} \label{cost-es}
\bar{c}_i(S^i) = v_d \cdot \sum\limits_{j \in S^i} P_{i,j},
\end{equation}
\begin{equation} \label{utility-es}
\bar{\Phi}_i(S^i) = \bar{g}_i(S^i) - \bar{c}_i(S^i).
\end{equation}

In each realization of opportunistic offloading, the content data is delivered through a tree structure rooted at the cellular system, which guarantees the total number of transmissions, including both cellular and D2D transmissions, is fixed at $|\Omega|$. Thus, the multi-hop D2D transmissions, which are not considered in our calculation, will increase the number of D2D transmissions, and at the same time, decrease the same amount of cellular transmissions. Due to the fact that $v_d \ll v_c$, we have that the overall payoff of the players in $G^i$ is underestimated, i.e., $\sum\nolimits_{i \in V(G^i)} \bar{\Phi}_i(S^i) \le \sum\nolimits_{i \in V(G^i)} \Phi_i(G^i)$. Therefore, (\ref{utility-es}) can be seen as the payoff function in the worst case scenario, and the network is expected to have a better performance.

\section{Dynamics of Network Formation}%

In this section, we consider how the players can automatically form a network. First, we specify the rules of network formation, which restricts what a player can be do in the network formation process. Then, we analyze the stability of the outcome network under such network formation games. At last, we consider the practical issues and propose a low-complex algorithm that can be distributively performed by mobile users.

\subsection{Rules of Network Formation}

The players in network formation games are independent decision makers who seek to maximize their individual payoffs by choosing selfish strategies. Note that it does not imply that the players cannot cooperate with each other. In many cases, the players are assumed to form cooperative groups, or coalitions, to increase the payoffs of coalition members, and such network formation games are often referred to as coalitional graph games~\cite{SHDHB-2009}. In such games, the players in a coalition are treated as an entity and their payoffs are jointly decided by the entire coalition. In fact, for any network $G \in \mathbb{G}$ of our proposed game, any player $i$ can be seen as a member of coalition $V(G^i)$, and player $i$'s payoff $\Phi(G^i)$ is decided by the cooperation among all coalition members, i.e., the structure of $G^i$. However, this coalitional approach can be problematic in opportunistic offloading, as most of the subscribers are completely strangers without any communications, and the size of a potential coalition could be extremely large due to the huge amount of users in a publish/subscribe service. Thus, we assume that the players individually choose their own strategies in the network formation process.

For any network $G \in \mathbb{G}$, the strategy of any player $i \in \Omega$ can be given by the set of player $i$'s adjacent links $\{e_{i,j} | e_{i,j} \in E(G)\}$, which indicates all the opportunistic sharing agreements that player $i$ makes with other players. During the network formation process, each player can dynamically add links to the network and subtract the existing links from the network. Note that an opportunistic sharing agreement indicates a mutual obligation of both ends, a link can exist only when it benefits both ends, and a link should be removed if any end cannot benefit from it. Formally, we give the rules of network formation as follows~\cite{JW-1996}:
\begin{enumerate}
\item add a link $e_{i,j}$ to $G$ requires that both players $i$ and $j$ agree to add the link, i.e., $\Phi_i(G \cup e_{i,j}) > \Phi_i(G)$ \emph{and} $\Phi_j(G \cup e_{i,j}) > \Phi_j(G)$, and we denote by $E^+(G)$ as the set of all links that can be added to $G$;
\item subtract a link $e_{i,j}$ from $G$ requires that player $i$ or player $j$ or both agree to subtract the link, i.e., $\Phi_i(G \backslash e_{i,j}) > \Phi_i(G)$ or $\Phi_j(G \backslash e_{i,j}) > \Phi_j(G)$, and we denote by $E^-(G)$ as the set of all links that can be subtracted from $G$;
\item link addition or link subtraction takes place one link at a time.
\end{enumerate}
The third rule implies that the players take random turns to alter their strategies, and in each turn, the corresponding player can only add a new link, or subtract an existing link. With this rule, the size of the player's action space is restricted to $N$. If we remove this rule, the player can add and subtract multiple links, and the size of its action space increases to $2^N$. When $N$ is large, it will bring combinatorial difficulty for the player to decide a feasible action. Therefore, we adopt the third rule in our proposed network formation game.

\subsection{Conditions for Pairwise Stable Networks}

In network formation games, the players keep changing their strategies to maximize their individual payoffs. The dynamics may end up to a network in which all players arrive their optimal strategies at the same time, i.e., no player has the incentive to change its current strategy and thereby improve its payoff, assuming all the other players stick to their current strategies. Such networks are called \emph{Nash-stable} networks, or Nash networks~\cite{BJ-2006}. In our proposed game, the player's actions are restricted by the network formation rules, and Nash stability is replaced by the following \emph{pairwise stability}~\cite{JW-1996}:

\begin{definition}\label{pairwise-stability}
A network $G$ is pairwise stable if both of the following conditions are satisfied:
\begin{enumerate}
\item for any pair of players $i,j \in \Omega$ with edge $e_{i,j} \notin E(G)$, we have $\Phi_i(G \cup e_{i,j}) \le \Phi_i(G)$ or $\Phi_j(G \cup e_{i,j}) \le \Phi_j(G)$, or in other words, $E^+(G) = \emptyset$;
\item for any pair of players $i,j \in \Omega$ with edge $e_{i,j} \in E(G)$, we have $\Phi_i(G \backslash e_{i,j}) < \Phi_i(G)$ and $\Phi_j(G \backslash e_{i,j}) < \Phi_j(G)$, or in other words, $E^-(G) = \emptyset$.
\end{enumerate}
\end{definition}

Thus, a network is pairwise stable, if there is no incentive for any pair of players to add a link to the existing network and there is no incentive for any player who is party to a link in the existing network to dissolve or remove the link.

\begin{figure}[!t]
\centering
\includegraphics[width=4.8in]{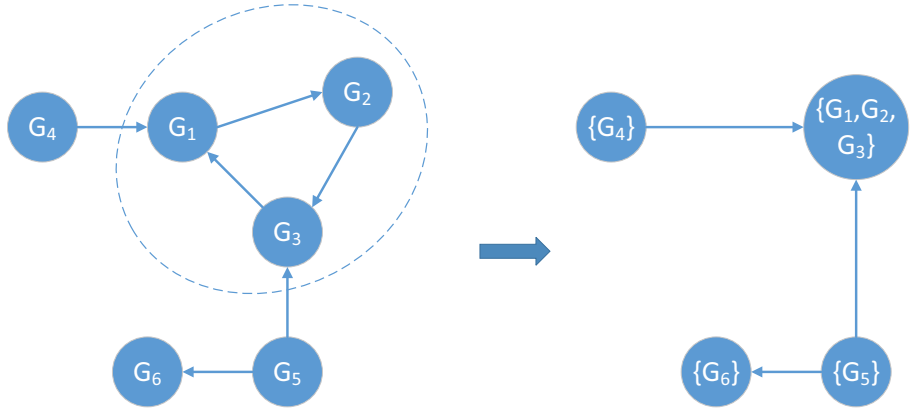}
\caption{The rule supernetwork $\Upsilon$ on the left side has a circuit consists of nodes $G_1,G_2$ and $G_3$, i.e., $G_1 \equiv_p G_2 \equiv_p G_3$. The path equivalence partition $\mathcal{P}$ is given by $\{ \{G_1,G_2,G_3\}, \{G_4\}, \{G_5\}, \{G_6\} \}$. The acyclic supernetwork $\Gamma$ on the right side has two basins $\{G_1, G_2, G_3\}$ and $\{G_6\}$, and $G_6$ is the only pairwise stable network.} \label{supernetwork}
\end{figure}

Given the definition of pairwise stability and the proposed rules for network formation, it is easy to see that, if the network formation process converges to a final network $G \in \mathbb{G}$, then the network $G$ must be pairwise stable. However, the existence of pairwise stable networks and the convergency of network formation rules are still unknown. In this subsection, we will specify the conditions under which the pairwise stable network exists and the players are guaranteed to converge to such networks by using the proposed network formation rules.

By viewing each network $G \in \mathbb{G}$ as a node in a larger network, we define a \emph{rule supernetwork} of the proposed network formation game as follows~\cite{PW-2007}:

\begin{definition}\label{rule-suppernetwork}
A rule supernetwork is a directed simple graph $\Upsilon$ where $V(\Upsilon)=\mathbb{G}$ and $e_{G_i,G_j} \in E(\Upsilon), G_i,G_j \in \mathbb{G}$ if either of the following conditions is satisfied:
\begin{enumerate}
\item $|E(G_i)-E(G_j)|=1$ and $E(G_i)-E(G_j) \subseteq E^-(G_i)$;
\item $|E(G_j)-E(G_i)|=1$ and $E(G_j)-E(G_i) \subseteq E^+(G_i)$.
\end{enumerate}
\end{definition}

Thus, each edge $e_{G_i,G_j} \in E(\Upsilon)$ indicates that network $G_i$ can be transformed to network $G_j$ with the proposed network formation rules, i.e., by adding a link in $E^+(G_i)$ or subtracting a link in $E^-(G_i)$. The entire rule supernetwork $\Upsilon$ specifies how the network formation rules transform a network to another network.

Given the rule supernetwork $\Upsilon$, we now define the \emph{path dominance} relation $\succeq_p$ and the \emph{path equivalence} relation $\equiv_p$ on networks $\mathbb{G}$ as follows~\cite{PW-2007}:

\begin{definition}\label{path}
Given the rule supernetwork $\Upsilon$, network $G' \in \mathbb{G}$ path dominates network $G \in \mathbb{G}$, written as $G' \succeq_p G$, if there exists a finite sequence of networks $\{G_k\}_{k=0}^{h}$ in $\mathbb{G}$ with $G_0=G$ and $G_h=G'$ such that edge $e_{G_{k-1},G_k} \in E(\Upsilon)$ for $k=1,2,\ldots,h$. And network $G' \in \mathbb{G}$ is path equivalent to network $G \in \mathbb{G}$, written as $G' \equiv_p G$, if $G' \succeq_p G$ and $G \succeq_p G'$.
\end{definition}

Thus, $G' \succeq_p G$ indicates that there exists a path in $\Upsilon$ that starts from node $G$ and ends at node $G'$, and $G' \equiv_p G$ indicates that there exists a circuit in $\Upsilon$ that contains both nodes $G$ and $G'$. Therefore, $G' \succeq_p G$ implies that network $G$ can automatically transform to network $G'$, and $G' \equiv_p G$ implies that networks $G$ and $G'$ can transform between each other.

Note that the path equivalence relation is reflexive, symmetric and transitive. We can classify the networks $\mathbb{G}$ by $\equiv_p$ and achieve a partition $\mathcal{P}$ of $\mathbb{G}$, i.e., $\bigcup\nolimits_{\mathbb{C} \in \mathcal{P}} \mathbb{C} = \mathbb{G}$ and $\forall \mathbb{C}, \mathbb{C}' \in \mathcal{P}, \mathbb{C} \cap \mathbb{C}'=\emptyset$. Given the initial $\mathcal{P} = \emptyset$, we can specify this partition by sequentially adding the networks $\mathbb{G}$ as follows: for network $G \in \mathbb{G}$, if $\exists \mathbb{C} \in \mathcal{P}$ such that $\exists G' \in \mathbb{C}$ and $G' \equiv_p G$, then set $\mathbb{C} = \mathbb{C} \cup G$, if not, then set $\mathcal{P} = \mathcal{P} \cup \{G\}$. Note that each element $\mathbb{C} \in \mathcal{P}$ represents a different circuit in $\Upsilon$, and due to the above process, no networks belong to different elements of $\mathcal{P}$ are path equivalent, i.e., for any networks $G \in \mathbb{C}, \mathbb{C} \in \mathcal{P}$ and $G' \in \mathbb{C}', \mathbb{C}' \in \mathcal{P}$, if $G \equiv_p G'$, then $\mathbb{C}=\mathbb{C}'$. We refer to $\mathcal{P}$ as the \emph{path equivalence partition}.

By viewing each element in $\mathcal{P}$ as a node, we define an \emph{acyclic supernetwork} of the proposed network formation game as follows:

\begin{definition}\label{scyclic-suppernetwork}
Given the rule supernetwork $\Upsilon$ and the path equivalence partition $\mathcal{P}$, an \emph{acyclic supernetwork} $\Gamma$ is a simple directed network where $V(\Gamma) = \mathcal{P}$, and edge $e_{\mathbb{C},\mathbb{C}'} \in E(\Gamma)$ if and only if there exists networks $G \in \mathbb{C}$ and $G' \in \mathbb{C}'$ such that $G' \succeq_p G$.
\end{definition}

Thus, the acyclic supernetwork $\Gamma$ indicates how the network transforms from one path equivalent network set to another path equivalent network set in $\mathcal{P}$. We say $\Gamma$ is acyclic because, if not, e.g., there exists a circuit containing $\mathbb{C}, \mathbb{C}' \in V(\Gamma)$ in $\Gamma$, then for any networks $G \in \mathbb{C}$ and $G' \in \mathbb{C}'$, the path from $\mathbb{C}$ to $\mathbb{C}'$ and the path from $\mathbb{C}'$ to $\mathbb{C}$ indicate $G' \succeq_p G$ and $G \succeq_p G'$, respectively, contradicting the fact that $G'$ and $G$ are not path equivalent. Therefore, once a network transforms to another network that belongs to a different node in $\Gamma$, there is no way for the players to return to the original network. We define the nodes in $\Gamma$ that have no out-going edges as \emph{basins}, i.e., $\mathbb{C} \subseteq V(\Gamma)$ is a basin if $e_{\mathbb{C},\mathbb{C}'} \notin E(\Gamma)$ for any node $\mathbb{C}' \in V(\Gamma)$. Thus, a basin is a set of networks to which the network formation process might tend and from which there is no escape. A network in a basin can only transform to other networks in the same basin. Particularly, if the cardinality of a basin $\mathbb{B}$ is $1$, i.e., $|\mathbb{B}|=1$, then the only network in this basin cannot transform to any other network, and thus, it is pairwise stable according to Definition~\ref{pairwise-stability}. Fig.~\ref{supernetwork} shows an example to illustrate the above concepts.

\begin{theorem}\label{stability}
Given a network formation game with basins $\mathcal{B}=\{\mathbb{B}_1, \mathbb{B}_2, \ldots \mathbb{B}_m\}$, where basin $\mathbb{B}_k$ contains $|\mathbb{B}_k|$ many networks, the following statements are true:
\begin{enumerate}
\item There always exists basins in a network formation game, i.e., $\mathcal{B} \ne \emptyset$, and after $L < \infty$ finite number of adding and subtracting a link, the following networks $G_L, G_{L+1},\ldots$ will be contained by a basin of the game, i.e., $\exists \mathbb{B} \in \mathcal{B}$ such that $G_k \in \mathbb{B}$ for $k \ge L$.
\item Pairwise stable networks exist if and only if some basins are singletons, i.e., $\exists 1 \le k \le m, |\mathbb{B}_k|=1$, and the set of all pairwise stable networks $\mathbb{PS}$ is given by the networks in such basins, i.e., $\mathbb{PS} = \bigcup\nolimits_{|\mathbb{B}_k|=1} \mathbb{B}_k$.
\item The network converges to a pairwise stable network after finite number of network formation operations, if and only if all the basins are singletons, i.e., $|\mathbb{B}_k|=1, k=1,2,\ldots m$.
\end{enumerate}
\end{theorem}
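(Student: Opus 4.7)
\textbf{Proof plan for Theorem~\ref{stability}.} My plan is to exploit the finiteness of $\mathbb{G}$ together with the acyclicity of $\Gamma$ (already established just before the theorem statement) and to read each of the three claims as a structural statement about the digraph $\Gamma$.

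For part (1), I would first observe that $\mathbb{G}$, being the set of all simple undirected graphs on the vertex set $\Omega$, is finite, so both $\Upsilon$ and $\Gamma$ are finite. Any finite DAG has at least one sink, and by Definition~\ref{scyclic-suppernetwork} a sink of $\Gamma$ is exactly a basin, so $\mathcal{B} \neq \emptyset$. For the entrapment claim, each legal network formation operation transforms the current network along an edge of $\Upsilon$, and every such edge either remains inside a single class of $\mathcal{P}$ (a move invisible in $\Gamma$) or projects to a forward edge in $\Gamma$. Because $\Gamma$ is a finite DAG, at most $|V(\Gamma)|-1$ forward edges can ever be traversed, after which the walk is trapped inside a sink. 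Taking $L$ to be the first index at which this occurs yields the stated bound, and the basin property prevents any escape thereafter.

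For part (2), I would argue both directions. If $G$ is pairwise stable, Definition~\ref{pairwise-stability} forces $E^+(G)=E^-(G)=\emptyset$, so by Definition~\ref{rule-suppernetwork} $G$ has no out-going edges in $\Upsilon$; consequently no $G' \neq G$ can satisfy $G' \succeq_p G$ when started at $G$, so the path-equivalence class of $G$ equals $\{G\}$, and moreover $\{G\}$ has no out-going edges in $\Gamma$, making it a singleton basin. Conversely, suppose $\mathbb{B}=\{G\}$ is a singleton basin and, for contradiction, that some link lies in $E^+(G)\cup E^-(G)$; this yields an $\Upsilon$-edge $e_{G,G'}$ with $G' \neq G$, and since $\{G\}$ is a singleton class, $G'$ belongs to a different class $\mathbb{C}'$, producing $e_{\{G\},\mathbb{C}'}\in E(\Gamma)$ and contradicting that $\{G\}$ is a basin. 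Hence $G$ is pairwise stable, and the characterization $\mathbb{PS}=\bigcup_{|\mathbb{B}_k|=1}\mathbb{B}_k$ follows at once.

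For part (3), sufficiency combines (1) and (2): from any initial network, (1) guarantees the walk enters some basin in finitely many steps, and if every basin is a singleton then by (2) this basin is precisely a pairwise stable network. For necessity, I argue by contrapositive: if some basin $\mathbb{B}_k$ has $|\mathbb{B}_k|\ge 2$, then initializing the process at any $G\in\mathbb{B}_k$ confines it to $\mathbb{B}_k$ forever, yet by (2) no element of $\mathbb{B}_k$ is pairwise stable, so no pairwise stable network is ever reached. The step I expect to require the most care is the converse of (2), where one must carefully distinguish ``no out-edges in $\Upsilon$'' (the definition of pairwise stability) from ``no out-edges in $\Gamma$'' (the definition of a basin) and use the singleton hypothesis to bridge the two; everything else reduces to routine finite-DAG reasoning once the supernetwork machinery is in place.
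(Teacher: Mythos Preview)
Your proposal is correct and follows essentially the same approach as the paper: both arguments use the finiteness of $\mathbb{G}$ and the acyclicity of $\Gamma$ to guarantee basins exist and eventually trap the process, characterize pairwise stable networks as exactly the singleton basins, and combine these two facts for the convergence statement. Your phrasing differs only cosmetically (you invoke ``every finite DAG has a sink'' and bound the number of forward $\Gamma$-moves, whereas the paper argues existence of a basin by contradiction and argues entrapment via the ``cannot return'' property), but the logical content is the same.
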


\begin{proof}

1) Supposing $\mathcal{B} = \emptyset$, then according to the definition of basins, each node in the acyclic supernetwork $\Gamma$ has an out-going edge. We remark the nodes in $\Gamma$ and denote by $e_{\mathbb{C}_k, \mathbb{C}_{k+1}}$ as the out-going edge of node $\mathbb{C}_k, k=1,2,\ldots$. Then, we will have a path of $\Gamma$ as follows:
\begin{equation}
\mathbb{C}_1 \stackrel{e_{\mathbb{C}_1, \mathbb{C}_2}}{\longrightarrow} \mathbb{C}_2, \stackrel{e_{\mathbb{C}_2, \mathbb{C}_3}}{\longrightarrow} \mathbb{C}_3 \stackrel{e_{\mathbb{C}_3, \mathbb{C}_4}}{\longrightarrow} \mathbb{C}_4 \stackrel{e_{\mathbb{C}_4, \mathbb{C}_5}}{\longrightarrow} \mathbb{C}_5 \ldots
\end{equation}
Note that $\Gamma$ is a graph with finite number of nodes. There exists $k>1$ such that $\mathbb{C}_1 = \mathbb{C}_k$. Thus, the above path is a circuit containing nodes $\mathbb{C}_1, \mathbb{C}_2, \ldots \mathbb{C}_k$, which contradicts the fact that $\Gamma$ is acyclic. Therefore, we have $\mathcal{B} \ne \emptyset$.

Consider a network $G$ that belongs to a non-basin node $\mathbb{C}$ in the acyclic supernetwork $\Gamma$, i.e., $\mathbb{C} \in V(\Gamma), \mathbb{C} \notin \mathcal{B}$. Then, there exists an out-going edge $e_{\mathbb{C}, \mathbb{C}'} \in E(\Gamma)$ from $\mathbb{C}$ to $\mathbb{C}'$, and thus, a path from network $G$ to another network in $\mathbb{C}'$ in $\Upsilon$. Due to the acyclic property of $\Gamma$, there does not exist a path from $\mathbb{C}'$ to $\mathbb{C}$ in $\Gamma$, and thus, neither a path from any network in $\mathbb{C}'$ to network $G$ in $\Upsilon$. Thus, when network $G$ randomly performs network formation operations, it will finally transform to a network of an adjacent node, e.g., $G' \in \mathbb{C}'$, and it cannot transform back to $G$ from $G'$. Therefore, the network will jump from node to node non-repetitively, until it drops into a basin node ($\mathcal{B} \ne \emptyset$), where there is no way out, and spins inside the basin from one network to another.

2) According to Definition~\ref{pairwise-stability}, a network is pairwise stable if it cannot transform to any another network according to the network formation rules. First, we note that all networks of the non-basin nodes are not pairwise stable, since they can transform to networks of their adjacent nodes as mentioned above. Second, we note that the networks of a basin node can transform between each other, since they are path equivalent. Thus, only the network of a basin with cardinality $1$ cannot transform to any other network, and thus is pairwise stable. Therefore, we have the set of all pairwise stable networks given by $\mathbb{PS} = \bigcup\nolimits_{|\mathbb{B}_k|=1} \mathbb{B}_k$.

3) As we proved in 1), after a finite number of network formation operations, the network will finally drop into a basin $\mathbb{B} \in \mathcal{B}$. If the basin contains multiple elements, then the network will not converge but spins among the networks in this basin. If all basins have cardinality $1$, then the network will converge no matter which basin it drops into. Also, as we proved in 2), the network of the basin with cardinality $1$ is pairwise stable. Therefore, the network will converge to a pairwise stable network if and only if $|\mathbb{B}_k|=1, k=1,2,\ldots m$.
\end{proof}

In summary, we prove that a network formation game will generally converge to particular sets of networks (basins), but may not converge to a particular network. The convergency is guaranteed only if all such particular sets have cardinality $1$ ($|\mathbb{B}_k|=1, k=1,2,\ldots m$), and once the conditions are satisfied, we can guarantee that it converges to a pairwise stable network.

\subsection{Data-Based Network Formation Algorithm}

\begin{algorithm}
\caption{Data-Based Network Formation Algorithm}
\label{NFA}
\begin{algorithmic} [1]
\REQUIRE~\\
The vector $\boldsymbol{d}^{i}$ that records the number of D2D transmissions from user $i$ in this period. \\
The vector $\boldsymbol{b}^{i}$ that records the number of D2D transmissions to user $i$ in this period. \\
The existing agreements between user $i$ and other users $E^i = \{e_{i,j} \in E(G)\}$.
\ENSURE~\\
The new agreements $E^i$ for the next decision period $T$. \\
\STATE $X \gets E^i$
\STATE $x \gets 1$
\WHILE{$X \ne \emptyset$ \AND $x \ne 0$}
\STATE Randomly select $e \gets e_{i,j} \in X$
\STATE $w(e_{i,j}) \gets b^i_j v_c - d^i_j v_d$
\IF{$w(e_{i,j}) < 0$}
\STATE $x = 0$
\ENDIF
\STATE $X \gets X \backslash e$
\ENDWHILE
\IF{$x = 0$}
\STATE $E^i \gets E^i \backslash e$.
\ENDIF
\end{algorithmic}
\end{algorithm}

The network formation rules provide a distributed algorithm for the mobile users to dynamically adjust their D2D agreements in opportunistic offloading, and Theorem~\ref{stability} specifies the conditions to converge to a pairwise stable outcome. However, this algorithm requires a user $i \in \Omega$ to be aware of the distributions of access delays (i.e., parameters $k_j, \lambda_j, j \ne i$) and inter-contact times (i.e., parameters $\alpha_{i,j},\tau_{i,j} , j \ne i$) of other users, which may not be feasible in practice. Also, the conditions under which this algorithm converges to a pairwise stable network are difficult to check, due to the huge computational complexity for calculating $\mathcal{B}$ (e.g., the rule supernetwork $\Upsilon$ contains $2^{\frac{N(N-1)}{2}}$ nodes and $\frac{N(N-1)}{4}  \cdot 2^{\frac{N(N-1)}{2}}$ edges). Therefore, we propose a practical algorithm in Algorithm~\ref{NFA}, in which the users collect historical data to help the decision process. We will show in the simulation section that this practical algorithm can converge to efficient networks in most cases.

Algorithm~\ref{NFA} starts from an initial network structure $G_0$ where opportunistic sharing agreements exist between any two users, i.e., $V(G_0)=\Omega,E(G_0)= \{e_{i,j} | i,j \in \Omega\}$, and each user periodically breaks down the existing links that are not cost-effective so as to increase their individual payoffs. The decision period $T$ is assumed to cover plenty enough times of content deliveries, so that a user can make reliable records between two consecutive decisions. Specifically, a user $i \in \Omega$ records a $1 \times N$ vector $\boldsymbol{d}^{i}$ where $d^i_j \in \mathbb{Z}^+$ represents the cumulative number of D2D communications from user $i$ to user $j$, and a $1 \times N$ vector $\boldsymbol{b}^{i}$ where $b^i_j \in \mathbb{Z}^+$ represents the cumulative number of D2D communications from user $j$ to user $i$. At the end of each decision period $T$, user $i$ gives each link $e_{i,j}$ a weight $w(e_{i,j}) = b^i_j v_c - d^i_j v_d$, which represents the utility that user $j$ brought to user $i$ in this period. If $w(e_{i,j})<0$, then link $e_{i,j}$ is not cost-effective to user $i$ and it should be removed from the network. For each period, we assume user $i$ randomly breaks down one link that is not cost-effective. After a certain number of periods, all the existing links in the network will be cost-effective to both associated end users.

Note that the proposed network formation game and network formation algorithm can be extended to multiple publish/subcribe scenarios. In such cases, the network payoff function should weight and sum up the payoffs from different services. In the network formation process, the users record the D2D communications of each service and jointly consider these data to decide whether an opportunistic sharing agreement is cost-effective. Also, our methods can be extended to data dissemination problems, in which not only the mobility pattern but also the social impact among users should be considered in the payoff function. In such cases, the network payoff function should involve the social impact as well as the access delays of mobile users.

\section{Simulations}%

In this section, we provide various simulation results to show the performance of Algorithm~\ref{NFA} in terms of convergence, offloading efficiency, and user payoff. Also, we compare the proposed game-theoretical algorithm with a centralized random seeding algorithm, in which the content provider randomly push the data to a set of seed users and then the seed users spread the data to other users via unconditional D2D sharing. In order to specify the parameters for simulation, we assume the parameters of the access delay, $\{k_i\}, \{\lambda_i\}$, are independent variables uniformly distributed in $[k_{min},k_{max}]$ and $[\lambda_{min},\lambda_{max}]$, respectively. As for the contact graph, we assume each user randomly contact at most with $M$ users, and the parameters $\{\alpha_{i,j}\}, \{\tau_{i,j}\}$, are independent variables uniformly distributed in $[\alpha_{min},\alpha_{max}]$ and $[\tau_{min},\tau_{max}]$, respectively.

\subsection{Convergency}

\begin{figure}
\centering
\subfigure[Network connectivity]{
\label{converge-connectivity} 
\includegraphics[width=3.0in]{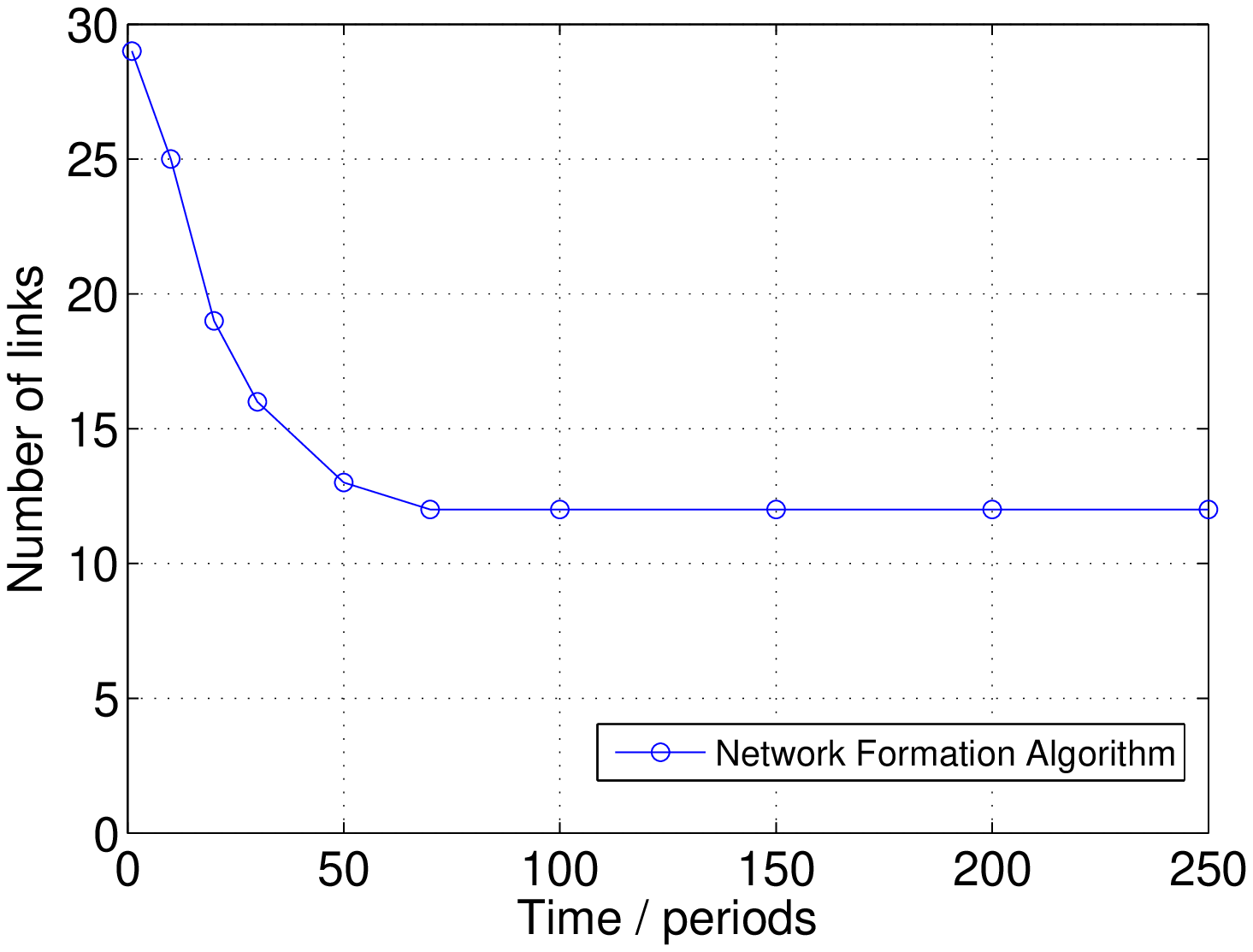}}
\subfigure[Network efficiency]{
\label{converge-efficiency} 
\includegraphics[width=3.0in]{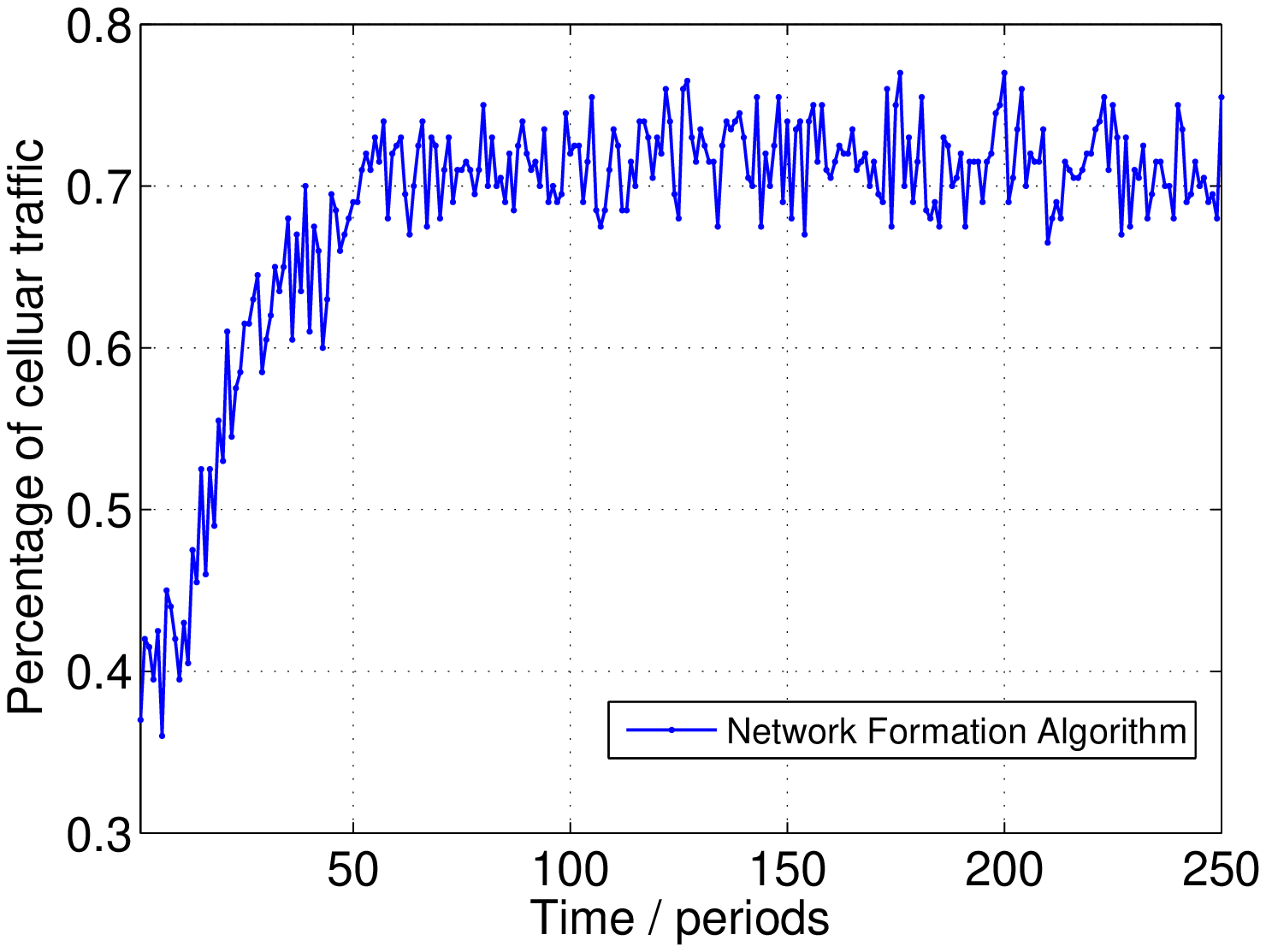}}
\caption{An implementation of the proposed network formation algorithm with respect to time. The parameters are given by $N = 20, v_c / v_d = 4, M=3, k_{min}=2, k_{max}=6, \lambda_{min}=15, \lambda_{max}=45, \alpha_{min}=1,\alpha_{max}=3, \tau_{min}=10, \tau_{max}=15$.}
\label{converge} 
\end{figure}

In Fig.~\ref{converge}, we show an implementation of the proposed network formation algorithm in the time line, where there are $N=20$ users and each user contacts with at most $M=3$ users. In Fig.~\ref{converge-connectivity}, we show the network connectivity by the number of links. In Fig.~\ref{converge-efficiency}, we show the percentage of cellular traffic compared with the scenario without opportunistic offloading. As we see, the network connectivity drops quickly as the users remove their unwanted links during time periods $t=1$ to $t=70$, and then when $t>70$, the users maintain the cost-effective links and the network becomes stable. The changes in network connectivity highly influence the cellular traffic. As we see in Fig.~\ref{converge-efficiency}, the cellular traffic first increases as the users break down unwanted links, and then become steady as the users converge to a stable network structure where every link is cost-effective to both ends. The fluctuation in Fig.~\ref{converge-efficiency} is due to the statistic characteristics of each implementation in each time period. Note that due to the randomness of the proposed network formation algorithm, the users may converge to different network structures even for the same implementation.

\begin{figure}
\centering
\subfigure[$v_c/v_d=4$]{
\label{convergetime-N} 
\includegraphics[width=3.0in]{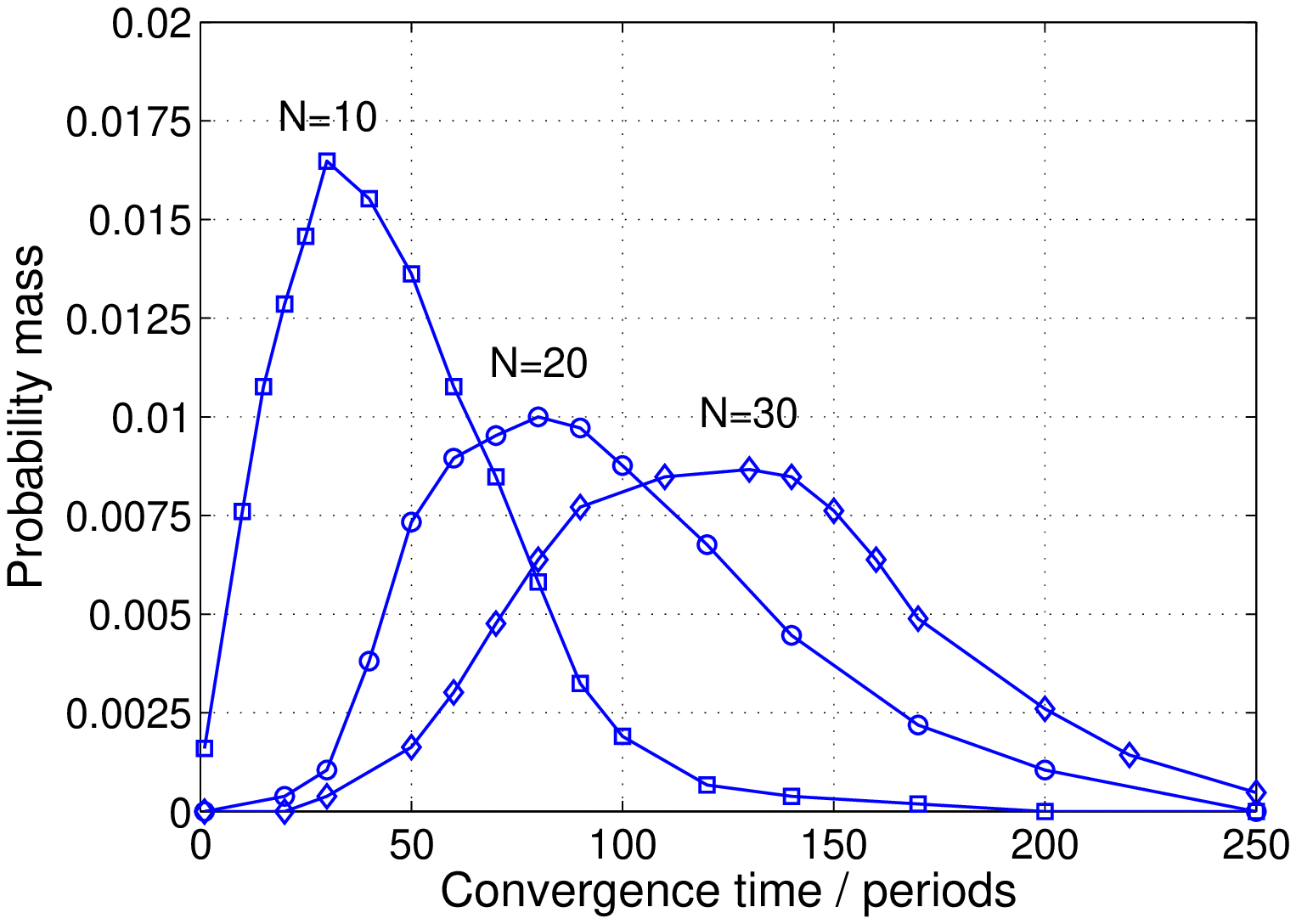}}
\subfigure[$N=20$]{
\label{convergetime-rate} 
\includegraphics[width=3.0in]{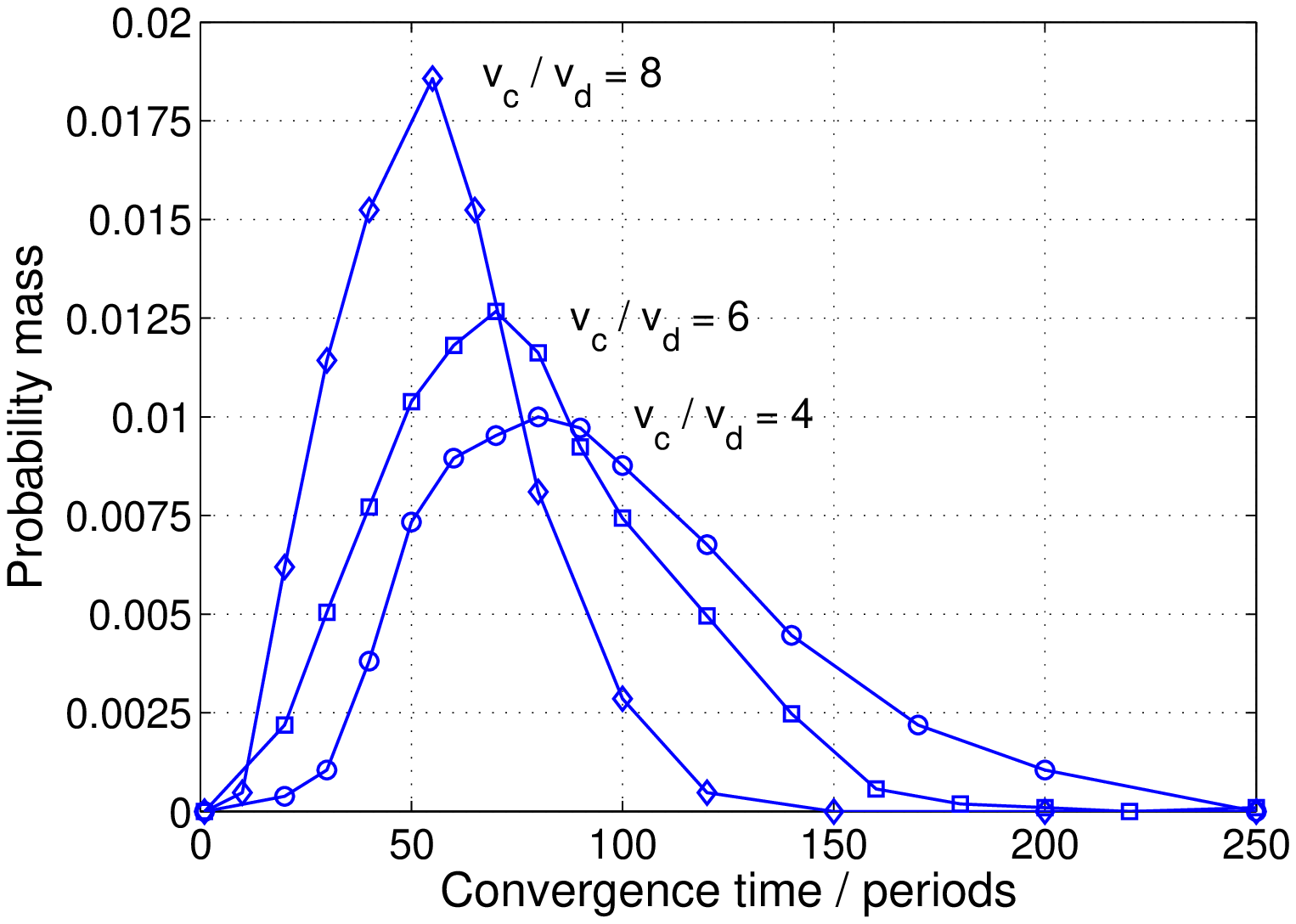}}
\caption{Probability mass function of convergence time of the proposed network formation algorithm. The parameters are given by $M=3, k_{min}=2, k_{max}=6, \lambda_{min}=15, \lambda_{max}=45, \alpha_{min}=1,\alpha_{max}=3, \tau_{min}=10, \tau_{max}=15$.}
\label{convergetime} 
\end{figure}

In Fig.~\ref{convergetime}, we show the distribution of convergence time of the proposed network formation algorithm with different user numbers $N = 10, 20, 30$ and different cost ratios $v_c / v_d = 4$. In Fig.~\ref{convergetime-N}, we see that, as the network scale increases, the users require more time to converge to a stable network, and the convergence time disperses in a larger range. Note that practical users tend to be organized by multiple small groups rather a large low-connected graph~\cite{ZNKT-2007, KLV-2010}. Thus, the proposed algorithm can be applied to large-scale networks, and the convergence time is decided by the largest group in the network. In Fig.~\ref{convergetime-rate}, we see that, as the cost ratio increases, which implies that the cost of D2D transmissions relatively decreases, the users are more willing to maintain sharing agreements with other users, and thus, the convergence time decreases and concentrates on a smaller range.

\subsection{Efficiency}

\begin{figure}[!t]
\centering
\includegraphics[width=4.2in]{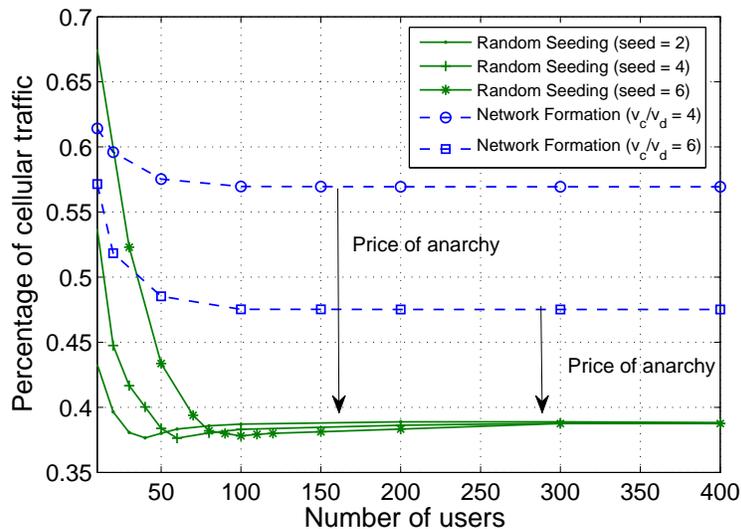}
\caption{Efficiency of the proposed network formation algorithm and referential random seeding method with respect to the number users $N$. The parameter are given by $M=3, k_{min}=2, k_{max}=6, \lambda_{min}=15, \lambda_{max}=45, \alpha_{min}=1,\alpha_{max}=3, \tau_{min}=10, \tau_{max}=15$.}
\label{efficiency-N}
\end{figure}

In Fig.~\ref{efficiency-N}, we show the percentage of cellular traffic with respect to the number of users $N$ for both the proposed network formation algorithm and the referential random seeding algorithm. As we can see, the proposed algorithm offloads $43\%$ cellular traffic when $v_c/v_d = 4$, and $52\%$ cellular traffic when $v_c/v_d = 6$, while the random seeding algorithm offloads $61\%$ cellular traffic. The gap between these two algorithms measures how the network efficiency degrades due to the selfish behavior of the mobile users (i.e., breaking down non-profitable links), which is often referred to as \emph{the price of anarchy} in game theory~\cite{PR-2013}. Therefore, the selfishness of mobile users can cost a high performance loss in practical scenarios, and the loss is highly influenced by the cost ratio $v_c/v_d$.

For the random seeding algorithm, we see that its performance highly depends on the number of seed users, as in Fig.~\ref{efficiency-N}. For any number of users $N$, there exists an optimal number of seed users. If the seed users exceed the optimal number, some seed users that may have the chance to be delivered via D2D transmissions, have to download the content via cellular transmissions, which unnecessarily increases the cellular traffic. If the seed users are below the optimal number, some of the non-seed users will not be able to receive the content via D2D transmissions before their access delays, and thus, they have to download the content from the base station and increase the cellular traffic. The envelope of all the random seeding curves provides the optimal performance of random seeding algorithm. As we see, when the network scale becomes extremely large, all the curves converge to the case where there is no seed users, since the seed number is negligible compared with the entire network. Note that even when the seed number is zero, the users still get the content after their access delay. Thus, D2D transmissions still exist among the users with different access delays and the network can still offload cellular traffic.

For the proposed network formation algorithm, we see that its performance highly depends on the cost ratio $v_c/v_d$, as in Fig.~\ref{efficiency-N}. For any cost ratio $v_c/v_d$, we can see that its cellular traffic decreases as the number of users $N$ increases, which implies that the offloading efficiency increases with the network scale. When $N$ is small, the network is relatively highly connected, and the selfish behavior of the users with short access delay, i.e., removing the non-profitable links, may influence a large population. When $N$ is large, the network is relatively sparse, and the influence of those selfish behaviors are restricted to a smaller population. Therefore, the performance of the proposed algorithm improves as the number of uses $N$ increases. When $N$ is extremely large, the network is extremely sparse where the selfish behaviors only influence the neighbor users. Since the neighbors of each user is a fixed number $M$, the network performance converges when $N$ is extremely large.

\begin{figure}[!t]
\centering
\includegraphics[width=4.2in]{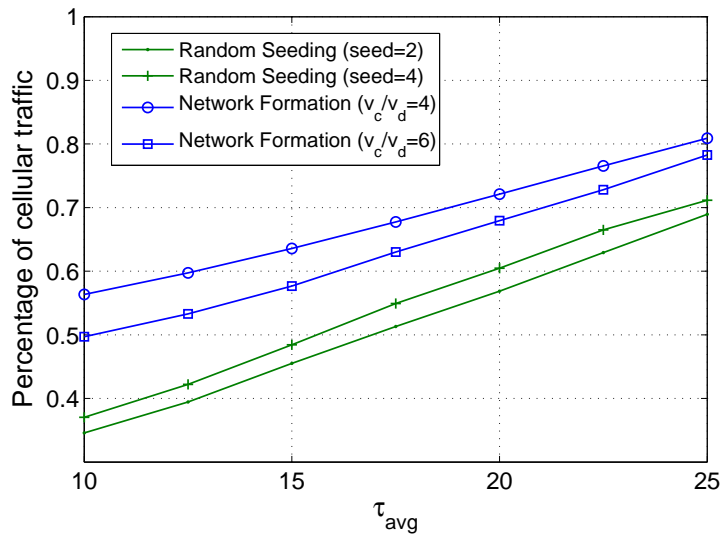}
\caption{Efficiency of the proposed network formation algorithm and referential random seeding method with respect to $\tau_{avg} = (\tau_{min}+\tau_{max})/2$. The parameter are given by $N=20, M=3, k_{min}=2, k_{max}=6, \lambda_{min}=15, \lambda_{max}=45, \alpha_{min}=1,\alpha_{max}=3$.}
\label{efficiency-Tau}
\end{figure}

In Fig.~\ref{efficiency-Tau}, we show the percentage of cellular traffic with respect to $\tau_{avg} = (\tau_{min}+\tau_{max})/2$ for both the proposed network formation algorithm and the referential random seeding algorithm. Here, $\tau_{avg}$ represents the average value of the minimum inter-contact time. As we see, the percentage of cellular traffics increases linearly with $\tau_{avg}$ for both algorithms. This figure indicates that for both the proposed algorithm and the random seeding algorithm, the network efficiency will degrade gradually as the user mobility decreases. Also, we see that the gap between these two algorithms decreases as $\tau_{avg}$ increases, which implies that the price of anarchy becomes smaller as the users have higher mobility.

\begin{figure}[!t]
\centering
\includegraphics[width=4.2in]{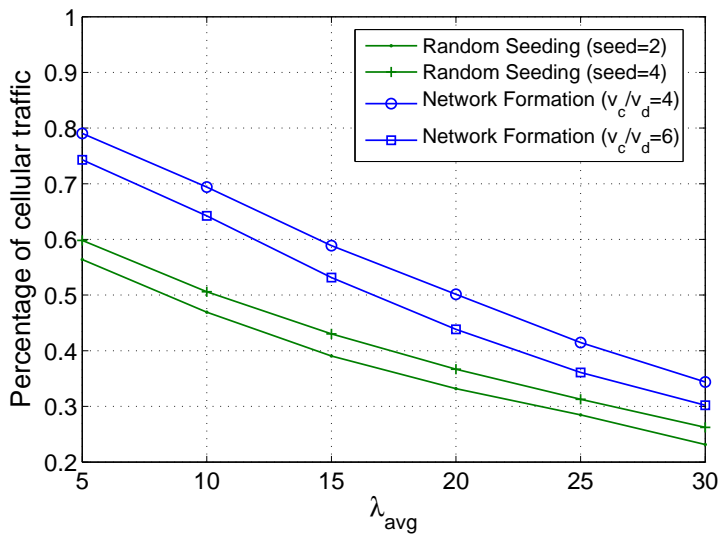}
\caption{Efficiency of the proposed network formation algorithm and referential random seeding method with respect to $\lambda_{avg} = (\lambda_{min}+\lambda_{max})/2$. The parameter are given by $N=20, M=3, k_{min}=2, k_{max}=6, \alpha_{min}=1, \alpha_{max}=3, \tau_{min}=10, \tau_{max}=15$.}
\label{efficiency-Lambda}
\end{figure}

In Fig.~\ref{efficiency-Lambda}, we show the percentage of cellular traffic with respect to $\lambda_{avg} = (\lambda_{min}+\lambda_{max})/2$ for both the proposed network formation algorithm and the referential random seeding algorithm. Here, $\lambda_{avg}$ represents the rough expected value of the user's access delay. As we see, the percentage of cellular traffics decreases linearly with $\lambda_{avg}$ for both algorithms. This figure indicates that for both the proposed algorithm and the random seeding algorithm, the network efficiency will improve gradually as the users tolerate larger delays. Also, we see that the gap between these two algorithms decreases as $\lambda_{avg}$ increases, which implies that the price of anarchy becomes smaller as the users can tolerate larger delays.

\subsection{User Payoff}

\begin{figure}
\centering
\subfigure[Random seeding]{
\label{payoff-a} 
\includegraphics[width=3.0in]{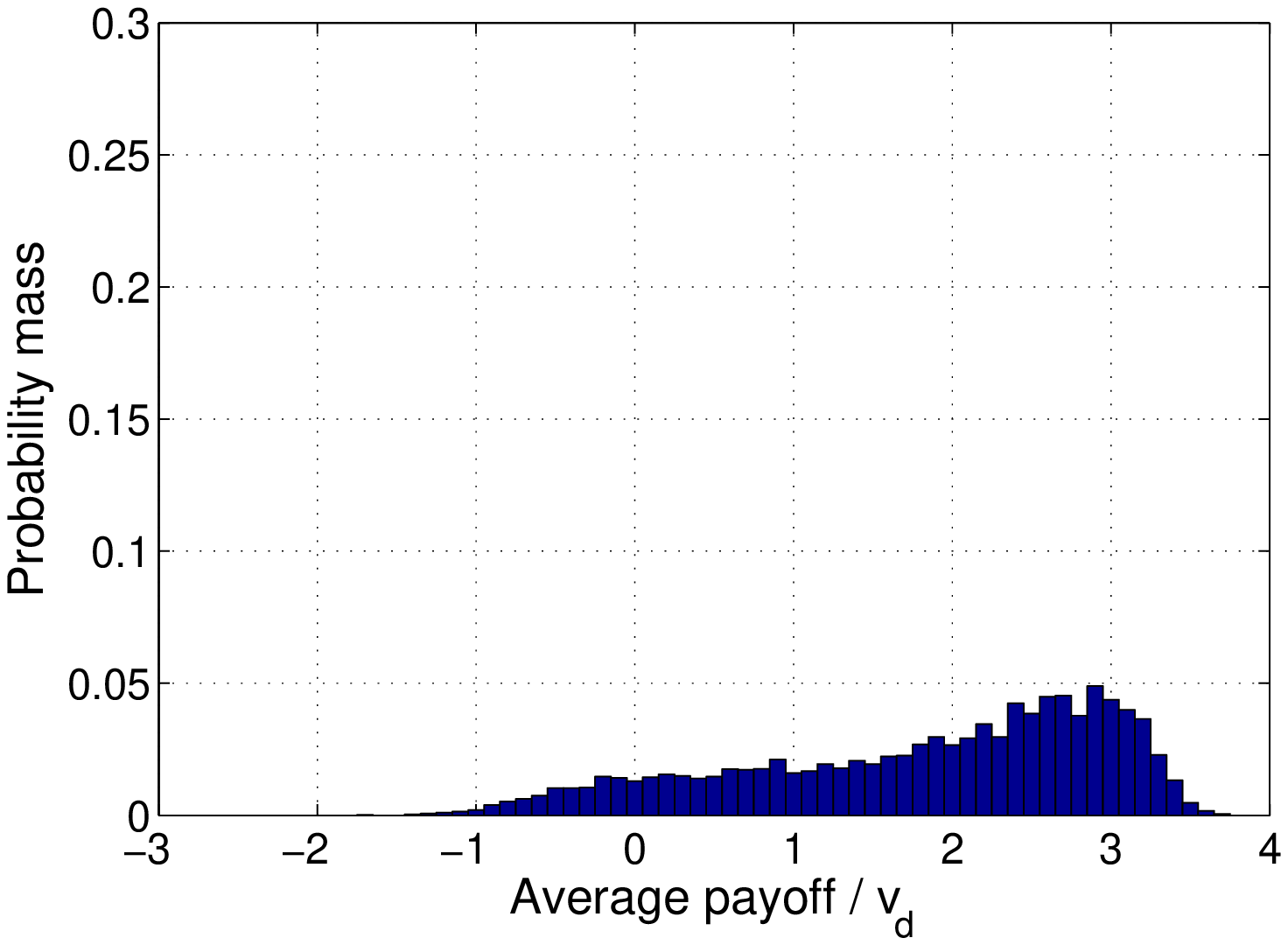}}
\subfigure[Network formation]{
\label{payoff-b} 
\includegraphics[width=3.0in]{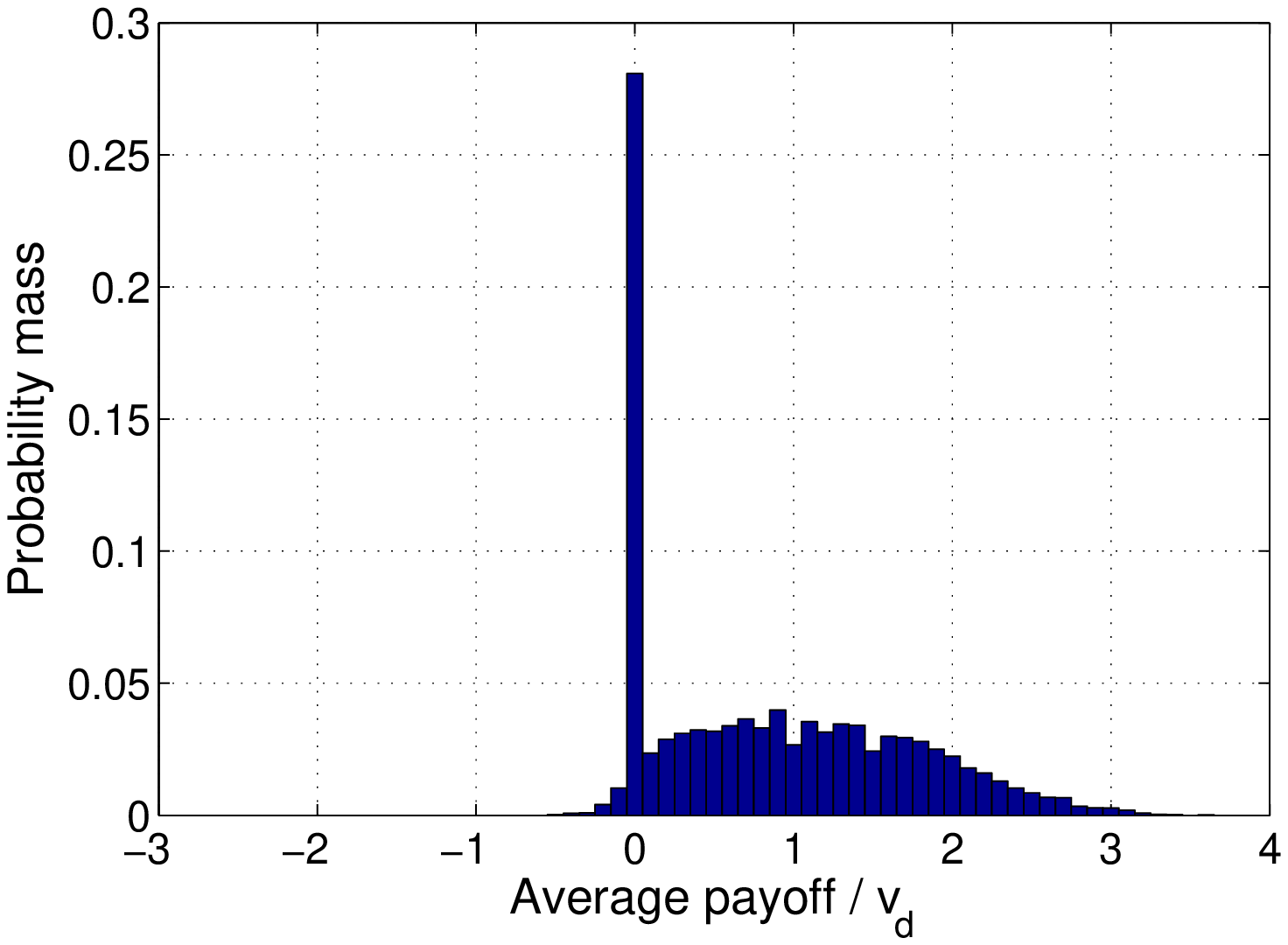}}
\caption{The average payoff distribution of both the proposed network formation algorithm and the referential random seeding algorithm. The parameters are given by $N = 20, v_c / v_d = 4, M=3, k_{min}=2, k_{max}=6, \lambda_{min}=15, \lambda_{max}=45, \alpha_{min}=1,\alpha_{max}=3, \tau_{min}=10, \tau_{max}=15$.}
\label{payoff-dis} 
\end{figure}

In Fig.~\ref{payoff-dis}, we show the distribution of user payoff for both the proposed network formation algorithm and the referential random seeding algorithm. In Fig.~\ref{payoff-a}, we see that the random seeding algorithm may lead to negative payoffs. While in Fig.~\ref{payoff-b}, we see that the proposed network formation algorithm can guarantee that all users have a non-negative payoff, since only cost-effective links are maintained in the network formation process. Note that a large portion of users in the network formation algorithm do not have any sharing agreements with other users, and thus, receives a zero payoff. Also, there still exists a very small portion of negative payoffs in Fig.~\ref{payoff-b}, and it is due to the statistic characteristics of each implementation.

\section{Conclusions}%

In this paper, we have studied the social data offloading of selfish users, for which we have proposed a network formation game to capture the characteristics of selfish behaviors, and a data-based network formation algorithm to guarantee positive payoffs of each user. Our analysis have shown that the users can be guaranteed to converge to a pairwise stable network under some conditions. And simulation results have shown that the performance of such pairwise stable networks can be highly degraded, compared with the ideal scenarios of selfishless users. The performance gap between selfish users and selfishless users becomes smaller as the cost ratio of cellular and D2D transmissions increases, as the users have higher mobility, or as the users can tolerate larger delays.

\end{document}